\DeclareMathOperator\argmin{arg\,min}
\DeclareMathOperator\argmax{arg\,max}
\newtheorem{thm}{Theorem}
\newtheorem{lem}{Lemma}
\newtheorem{defi}{Definition}
\newtheorem{prop}{Proposition}
\title{On Unconstrained Quasi-Submodular Function Optimization}
\author{
Jincheng Mei, Kang Zhao and Bao-Liang Lu\\
Center for Brain-Like Computing and Machine Intelligence\\
Department of Computer Science and Engineering\\
Key Laboratory of Shanghai Education Commission for Intelligent Interaction and Cognitive Engineering\\
Shanghai Jiao Tong University\\
800 Dong Chuan Road, Shanghai 200240, China\\
\texttt{\{jcmei,sjtuzk,bllu\}@sjtu.edu.cn} \\
}
\begin{document}

\maketitle

\begin{abstract}
With the extensive application of submodularity, its generalizations are constantly being proposed. However, most of them are tailored for special problems. In this paper, we focus on quasi-submodularity, a universal generalization, which satisfies weaker properties than submodularity but still enjoys favorable performance in optimization. Similar to the diminishing return property of submodularity, we first define a corresponding property called the {\em single sub-crossing}, then we propose two algorithms for unconstrained quasi-submodular function minimization and maximization, respectively. The proposed algorithms return the reduced lattices in $\mathcal{O}(n)$ iterations, and guarantee the objective function values are strictly monotonically increased or decreased after each iteration. Moreover, any local and global optima are definitely contained in the reduced lattices. Experimental results verify the effectiveness and efficiency of the proposed algorithms on lattice reduction.
\end{abstract}

\section{Introduction}

Given a ground set $N=\{1,2,\cdots,n\}$, a set function $F: 2^N \mapsto \mathbb{R}$ is said to be submodular \cite{fujishige2005submodular} if $\forall X,Y \subseteq N$,
\begin{equation*}
    F(X)+F(Y) \ge F(X\cap{Y})+F(X\cup{Y}).
\end{equation*}
An equivalent definition is given as following, \emph{i.e.}, $\forall A \subseteq B \subseteq N$, $i \in N \setminus B$,
\begin{equation*}
    F(i|A) \ge F(i|B),
\end{equation*}
where $F(i|A) \triangleq F(A + i)-F(A)$ is called the marginal gain of $i$ with respect to $A$. It implies that submodular functions capture the {\em diminishing return} property. To facilitate our presentation, we use $F(A+i)$ to refer to $F(A\cup\{i\})$, and $F(A-i)$ to refer to $F(A \setminus\{i\})$.

Submodularity is widely applied in economics, combinatorics, and machine learning, such as welfare allocation \cite{vondrak2008optimal},
sensor placement \cite{krause2008near}, feature selection \cite{das2011submodular}, and computer vision \cite{liu2011entropy}, to name but a few.

With the wide application of submodularity, it has many generalizations. For example, Singh et al. \cite{singh2012bisubmodular} formulate multiple sensor placement and multimodal feature selection as bisubmodular function maximization, where the objectives have multiple set arguments. Golovin and Krause \cite{golovin2011adaptive} introduce the concept of adaptive submodularity to make a sequence of adaptive decisions with uncertain responses. Feige \cite{feige2009maximizing} proposes maximizing subadditive functions on welfare problems to capture the complement free property of the utility functions. However, all the mentioned generalizations of submodularity enjoy benefits in special application scenarios (multiset selection, adaptive decision, and complement free allocation).

In this paper, we study a universal generalization. Submodularity is often viewed as the discrete analogue of convexity \cite{lovasz1983submodular}. One of the most important generalizations of convexity is quasi-convexity \cite{boyd2004convex}. Quasi-convex functions satisfy some weaker properties, but still benefit much from the optimization perspective. More specifically, quasi-convex constraints can be easily transformed to convex constraints via sublevel sets, and quasi-convex optimization problems can be solved through a series of convex feasibility problems using bisection methods \cite{boyd2004convex}. Considering the celebrated analogue between submodularity and convexity, a natural question is whether submodularity has similar generalizations which satisfy weaker properties but still enjoy favorable performance in optimization? In this paper, we positively answer this question and refer to this generalization as quasi-submodularity.

As aforementioned, quasi-submodularity is a weaker property than submodularity. Similar to the diminishing return property of submodular functions, we first define a corresponding property called single sub-crossing. Then we propose two algorithms for unconstrained quasi-submodular minimization and maximization, respectively. Our theoretical analyses show that the proposed algorithms strictly increase or decrease the objective function values after each iteration. The output reduced lattices can be obtained in $\mathcal{O}(n)$ iterations, which contain all the local and global optima of the optimization problems. The theoretical and experimental results indicate that although quasi-submodularity is a weaker property than submodularity, it enjoys favorable performance in optimization.

The rest of the paper is organized as follows. In Section $2$, we introduce the concept of quasi-submodularity and define the single sub-crossing property. In Section $3$ and Section $4$, we present the efficient algorithms and theoretical analyses for unconstrained quasi-submodular function minimization and maximization, respectively. After that, we provide some discussion in Section $5$. Experimental results in Section $6$ verify the effectiveness of the proposed algorithms on lattice reduction. Finally, we introduce some related work in Section $7$ and give some conclusions about our work in Section $8$.

\section{Quasi-Submodularity}

It is well known that the term semi-modular is taken from lattice theory \cite{edmonds2003submodular}. A lattice is a partially ordered set, which contains the supremum and infimum of each element pair. Here, we introduce a very useful lattice.

\begin{defi}[Set Interval Lattice]
Given two ground sets $A$, $B$, a set interval lattice $\mathcal{L} = [A,B]$ is defined as $\{U \ |\ A \subseteq U \subseteq B\}$. $\mathcal{L}$ is not empty if and only if $A \subseteq B$.
\end{defi}
In the set interval lattice, the partially order relation is defined as the set inclusion $\subseteq$. A set $S \in \mathcal{L}$ iff $A \subseteq S \subseteq B$. Obviously, $\forall X, \ Y \in \mathcal{L}$, we have $X \cap Y, \ X \cup Y \in \mathcal{L}$, thus $\mathcal{L}$ is a lattice.

The concept of quasi-supermodularity is first proposed by Milgrom and Shannon \cite{milgrom1994monotone} in economic fields . Quasi-supermodularity captures the monotonicity of the solutions as the problem parameters change, and has been proved useful in game theory \cite{leininger2006fending}, parametric cuts \cite{granot2012structural}, and discrete convex analysis \cite{murota2003discrete}. Following \cite{milgrom1994monotone}, we give the definition of quasi-submodularity.

\begin{defi}[QSB]
A set function $F: 2^N \mapsto \mathbb{R}$ is quasi-submodular function if $\forall X,Y \subseteq N$, {\bf both} of the following conditions are satisfied
\begin{align*}\tag{1}
\begin{split}
    F(X\cap{Y}) &\ge F(X) \Rightarrow F(Y) \ge F(X\cup{Y}),\\
    F(X\cap{Y}) &> {F(X)} \Rightarrow F(Y) > F(X\cup{Y}).
\end{split}
\label{eq:1}
\end{align*}
\end{defi}
The following proposition implies that the concept of quasi-submodularity is a generalization of submodularity.
\begin{prop}
Any submodular function is quasi-submodular function, but {\bf not} vice versa.
\end{prop}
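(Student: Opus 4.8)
The plan is to handle the two assertions separately. For \textbf{submodular $\Rightarrow$ quasi-submodular}, I would start from the submodular inequality $F(X)+F(Y)\ge F(X\cap Y)+F(X\cup Y)$ and rewrite it as
\[ F(X)-F(X\cap Y)\;\ge\;F(X\cup Y)-F(Y). \]
A sign chase then yields both conditions of \eqref{eq:1} at once. If $F(X\cap Y)\ge F(X)$, the left-hand side is $\le 0$, hence so is the right-hand side, giving $F(Y)\ge F(X\cup Y)$. If moreover $F(X\cap Y)>F(X)$, the left-hand side is $<0$, hence so is the right-hand side, giving $F(Y)>F(X\cup Y)$. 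Thus every submodular function is quasi-submodular.

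For \textbf{the converse failing}, I would exhibit an explicit counterexample on the smallest ground set $N=\{1,2\}$, say $F(\emptyset)=0$, $F(\{1\})=F(\{2\})=1$, $F(\{1,2\})=3$. Submodularity is violated because $F(\{1\})+F(\{2\})=2<3=F(\emptyset)+F(\{1,2\})$. To verify quasi-submodularity, observe that when $X$ and $Y$ are comparable we have $\{X\cap Y,\,X\cup Y\}=\{X,Y\}$, so both implications in \eqref{eq:1} become tautologies; hence only the incomparable pair $\{1\},\{2\}$ together with its ordered swap needs checking, and for either ordering the hypotheses $F(\emptyset)\ge F(\{i\})$ and $F(\emptyset)>F(\{i\})$ are both false (as $0<1$), so the implications hold vacuously.

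A cleaner route to the same conclusion, which also yields a whole family of separating examples, is to note that \emph{any} strictly monotone set function (one with $A\subsetneq B\Rightarrow F(A)<F(B)$) is quasi-submodular: for incomparable $X,Y$ one has $X\cap Y\subsetneq X$, so $F(X\cap Y)<F(X)$ and both hypotheses in \eqref{eq:1} fail, while the comparable case is again trivial; and there are strictly monotone functions that are not submodular, such as $F(S)=|S|^2$ whenever $|N|\ge 2$. I do not anticipate a genuine obstacle; the only point that needs care is the bookkeeping in the counterexample — checking \eqref{eq:1} for every ordered pair $(X,Y)$ — which the comparability observation collapses to a single pair.
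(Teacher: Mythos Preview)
Your proof is correct. The forward implication is argued the same way as in the paper, just framed directly rather than by contradiction: both rest on rewriting submodularity as $F(X)-F(X\cap Y)\ge F(X\cup Y)-F(Y)$ and reading off the sign implications. For the failure of the converse you give a different counterexample than the paper (which takes $F(\emptyset)=1$, $F(\{1\})=0$, $F(\{2\})=1.5$, $F(\{1,2\})=1$, a genuinely supermodular function), but your strictly monotone example works just as well and your general remark---that any strictly monotone set function is quasi-submodular because the hypotheses of \eqref{eq:1} fail on every incomparable pair---is a pleasant strengthening that the paper does not mention, yielding an infinite family of separating examples such as $|S|^2$.
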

\begin{proof}
Suppose $F: 2^N \mapsto \mathbb{R}$ is a submodular function, and $F$ is not a quasi-submodular function. Then we have $F(X \cap Y) \ge F(X)$, $F(Y) < F(X \cup Y)$, or $F(X \cap Y) > F(X)$, $F(Y) \le F(X \cup Y)$. Both of the two cases lead to $F(X)+F(Y) < F(X\cap{Y})+F(X\cup{Y})$, which contradicts the definition of submodularity.

A counterexample is given to prove a quasi-submodular function may not be a submodular function. Suppose $N=\{1,2\}$, $F(\emptyset) = 1$, $F(\{1\}) = 0$, $F(\{2\}) = 1.5$, and $F(\{1,2\}) = 1$. It is easy to check that $F$ satisfies the definition of QSB. But $F$ is not a submodular function, since $F(\{1\})+F(\{2\}) < F(\emptyset)+F(\{1,2\})$. Actually, $F$ is a supermodular function.
\end{proof}
Similar to the diminishing return property of submodular functions, we define a corresponding property for quasi-submodularity, and name it as {\em single sub-crossing}.

\begin{defi}[SSBC]
A set function $F: 2^N \mapsto \mathbb{R}$ satisfies the single sub-crossing property if $\forall A \subseteq B \subseteq N,\ i \in N \setminus B$, {\bf both} of the following conditions are satisfied
\begin{align*}\tag{2}
\begin{split}
    F(A) &\ge F(B) \Rightarrow F(A+i) \ge F(B+i),\\
    F(A) &> F(B) \Rightarrow F(A+i) > F(B+i).
\end{split}
\label{eq:2}
\end{align*}
\end{defi}
As mentioned before, submodularity and diminishing return property are equivalent definitions. Analogously, quasi-submodularity and single sub-crossing property  are also equivalent.
\begin{prop}
Any quasi-submodular function satisfies the single sub-crossing property, and vice versa.
\end{prop}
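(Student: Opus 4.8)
The plan is to prove the two implications separately, noting first that the single sub-crossing property (SSBC) is essentially the restriction of quasi-submodularity (QSB) to pairs $(X,Y)$ whose union differs from their intersection by a single element; consequently one direction is almost immediate, while the other needs a short telescoping induction.

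\textbf{QSB $\Rightarrow$ SSBC.} Fix $A \subseteq B \subseteq N$ and $i \in N \setminus B$, and instantiate the defining condition of quasi-submodularity at $X := B$, $Y := A + i$. Using $A \subseteq B$ and $i \notin B$ one has the elementary identities $X \cap Y = B \cap (A+i) = A$ and $X \cup Y = B \cup (A+i) = B+i$. Substituting these, the first line of the QSB definition becomes ``$F(A) \ge F(B) \Rightarrow F(A+i) \ge F(B+i)$'' and the second becomes ``$F(A) > F(B) \Rightarrow F(A+i) > F(B+i)$'', which is exactly the SSBC condition; no induction is required.

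\textbf{SSBC $\Rightarrow$ QSB.} I would first promote the single sub-crossing property to a multi-element version by induction on $k$: whenever $A \subseteq B \subseteq N$ and $i_1,\dots,i_k$ are distinct elements of $N \setminus B$, then $F(A) \ge F(B) \Rightarrow F(A \cup \{i_1,\dots,i_k\}) \ge F(B \cup \{i_1,\dots,i_k\})$, and the same statement holds with every $\ge$ replaced by $>$. The inductive step applies SSBC to the pair $\bigl(A \cup \{i_1,\dots,i_{k-1}\},\, B \cup \{i_1,\dots,i_{k-1}\}\bigr)$ together with the element $i_k$; its hypotheses are met because this pair is still nested and $i_k$, being distinct from $i_1,\dots,i_{k-1}$ and outside $B$, lies outside $B \cup \{i_1,\dots,i_{k-1}\}$. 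Now take arbitrary $X,Y \subseteq N$ with $F(X\cap Y) \ge F(X)$, set $A := X\cap Y$ and $B := X$ (so $A \subseteq B$), and let $\{i_1,\dots,i_m\}$ enumerate $Y \setminus X$, which is disjoint from $X = B$. The multi-element version then yields $F\bigl((X\cap Y)\cup(Y\setminus X)\bigr) \ge F\bigl(X\cup(Y\setminus X)\bigr)$, and the set identities $(X\cap Y)\cup(Y\setminus X) = Y$ and $X\cup(Y\setminus X) = X\cup Y$ turn this into $F(Y) \ge F(X\cup Y)$. The strict case is handled identically using the strict lines throughout.

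The set-algebra identities and the induction bookkeeping are routine; the one point that demands care is that strictness must survive every step of the telescoping, which it does precisely because the strict half of SSBC produces a strict output from a strict input. So the main obstacle is organizational rather than conceptual: setting up the multi-element induction so that the nesting and disjointness hypotheses of SSBC are visibly preserved at each stage.
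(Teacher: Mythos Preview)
Your proposal is correct and follows essentially the same route as the paper: both directions use the instantiation $X=B$, $Y=A+i$ for QSB $\Rightarrow$ SSBC, and for SSBC $\Rightarrow$ QSB both telescope over the elements of $Y\setminus X$ starting from the pair $(X\cap Y,\,X)$. Your version is somewhat more explicit in isolating the multi-element lemma and verifying the nesting/disjointness hypotheses at each inductive step, but the underlying argument is the same.
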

\begin{proof}
Suppose $F: 2^N \mapsto \mathbb{R}$ is a quasi-submodular function. $\forall A \subseteq B \subseteq N,\ i \in N \setminus B$, let $X=B,\ Y=A+i$ in (\ref{eq:1}). It is obvious that $F$ satisfies the SSBC property.

On the other hand, suppose $F$ satisfies the SSBC property. $\forall X,Y \subseteq N$, we denote $Y \setminus X = \{i_1,i_2,\cdots,i_k\}$. Based on the SSBC property, if $F(X\cap{Y}) \ge (>) F(X)$, then we have $F(X\cap{Y} + i_1) \ge (>) F(X+i_1)$. Similarly, we have $F(X\cap{Y} + i_1 + i_2) \ge (>) F(X+i_1+i_2)$. Repeating the operation until $i_k$ is added, we get $F(Y) \ge (>) F(X\cup{Y})$.
\end{proof}
Note that in the proof above, if we exchange $X$ and $Y$, \emph{i.e.}, let $X=A+i,\ Y=B$, we will get
\begin{align*}
    F(A) &\ge F(A+i) \Rightarrow F(B) \ge F(B+i),\\
    F(A) &> F(A+i) \Rightarrow F(B) > F(B+i).
\end{align*}
We can rewrite it using the marginal gain notation, \emph{i.e.}, $\forall A \subseteq B \subseteq N,\ i \in N \setminus B$,
\begin{equation*}\tag{3}
    F(i|A) \le (<)\ 0 \Rightarrow F(i|B) \le (<)\ 0.
\label{eq:3}
\end{equation*}
Note that although $X$ and $Y$ are symmetric and interchangeable in (\ref{eq:1}), we get a representation which is different with the SSBC property. Actually, (\ref{eq:3}) is a weaker condition than (\ref{eq:1}). The proposed algorithms work on the weaker notion (\ref{eq:3}), and the results also hold for quasi-submodularity.

\section{Unconstrained Quasi-Submodular Function Minimization}

In this section, we are concerned with general unconstrained quasi-submodular minimization problems, where the objective functions are given in the form of value oracle. Generally, we do not make any additional assumptions (such as nonnegative, monotone, symmetric, etc) except quasi-submodularity.

Very recently, Iyer et al. \cite{iyer2013fast} propose a discrete Majorization-Minimization like submodular function minimization algorithm. In \cite{iyer2013fast}, for each submodular function, a tight modular upper bound is established at the current working set, then this bound is minimized as the surrogate function of the objective function. But for quasi-submodular function, there is no known superdifferential, and it can be verified that the upper bounds in \cite{iyer2013fast} are no longer bounds for quasi-submodular functions. Actually, without submodularity, quasi-submodularity is sufficient to perform lattice reduction. Consequently, we design the following algorithm.

\begin{algorithm}[h]
\caption{Unconstrained Quasi-Submodular Function Minimization (UQSFMin)}
\label{algo1}
\begin{algorithmic}[1]
\Require
Quasi-submodular function $F$, $N=\{1,2,...,n\}$, $X_0 \subseteq N$, $t \leftarrow 0$.
\Ensure
$X_t$ as a local optimum of $\min\limits_{X \subseteq N}{F(X)}$.
\State At Iteration $t$, find $U_t = \{u \in N \setminus X_t \ |\  F(u|X_t) < 0\}$. $Y_t \leftarrow X_t \cup U_t$.
\State Find $D_t = \{d \in X_t \ |\ F(d|Y_t - d) > 0\}$. $X_{t+1} \leftarrow Y_t \setminus D_t$.
\State If $X_{t+1} = X_t$ (iff $U_t = D_t = \emptyset$), stop and output $X_t$.
\State $t \leftarrow t+1$. Back to Step $1$.
\end{algorithmic}
\end{algorithm}

$X$ is a local minimum means $\forall i \in X$, $F(X-i) \ge F(X)$, and  $\forall j \in N \setminus X$, $F(X+j) \ge F(X)$.

Algorithm $1$ has several nice theoretical guarantees. First, the objective function values are strictly decreased after each iteration, as the following lemma states.
\begin{lem}
After each except the last iteration of Algorithm $1$, the objective function value of the working set is strictly monotonically decreased, i.e., $\forall t$, $F(X_{t+1}) < F(X_t)$.
\end{lem}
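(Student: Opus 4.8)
The plan is to show that each full iteration strictly decreases $F$ by analyzing the two sub-steps (adding $U_t$, then removing $D_t$) separately, and arguing that the first sub-step does not increase $F$ and the second sub-step does not increase $F$, with a strict decrease somewhere as long as the iteration is not the last one (i.e. as long as $U_t \cup D_t \neq \emptyset$).

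First I would handle the passage from $X_t$ to $Y_t = X_t \cup U_t$, where $U_t = \{u \in N\setminus X_t : F(u|X_t) < 0\}$. Enumerate $U_t = \{u_1,\dots,u_p\}$ and add the elements one at a time, writing $X_t = Z_0 \subsetneq Z_1 \subsetneq \cdots \subsetneq Z_p = Y_t$ with $Z_j = Z_{j-1} + u_j$. The claim is $F(u_j | Z_{j-1}) < 0$ for every $j$: indeed $F(u_j|X_t) < 0$ by definition of $U_t$, and since $X_t \subseteq Z_{j-1}$ and $u_j \notin Z_{j-1}$, the weaker single sub-crossing consequence (\ref{eq:3}) gives $F(u_j|Z_{j-1}) < 0$. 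Hence $F(Z_j) < F(Z_{j-1})$ for each $j$, and telescoping yields $F(Y_t) < F(X_t)$ whenever $U_t \neq \emptyset$, and $F(Y_t) = F(X_t)$ when $U_t = \emptyset$.

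Next I would handle the passage from $Y_t$ to $X_{t+1} = Y_t \setminus D_t$, where $D_t = \{d \in X_t : F(d|Y_t - d) > 0\}$. This is the dual situation, so I would apply the other branch of the single sub-crossing property. Enumerate $D_t = \{d_1,\dots,d_q\}$ and remove elements one at a time: $Y_t = W_0 \supsetneq W_1 \supsetneq \cdots \supsetneq W_q = X_{t+1}$ with $W_k = W_{k-1} - d_k$. I need $F(d_k | W_k) > 0$, i.e. $F(W_{k-1}) > F(W_k)$, for each $k$. We know $F(d_k | Y_t - d_k) > 0$. Writing $A = W_k = W_{k-1} - d_k$ and $B = Y_t - d_k$, we have $A \subseteq B$ and $d_k \notin B$ (since $d_1,\dots,d_k$ have all been removed), and $F(d_k|B) > 0$ means $F(B) < F(B + d_k)$, i.e. $F(B) < F(B+d_k)$; translating into the $F(A)>F(A+i)$ form of (\ref{eq:3}) applied with the roles set up so that $F(d_k|B) > 0 \Rightarrow F(d_k|A) > 0$ — this is exactly the contrapositive-style monotonicity from the ``exchange $X$ and $Y$'' remark preceding (\ref{eq:3}), now used in the direction ``larger set has positive marginal $\Rightarrow$ smaller set has positive marginal.'' Care is needed here: (\ref{eq:3}) as literally stated propagates nonpositive/negative marginals upward, so I must instead invoke the symmetric statement $F(i|A) \ge (>)\, 0 \Rightarrow F(i|B) \ge (>)\, 0$, which follows from the SSBC property by the same one-element-at-a-time argument used in Proposition 2 (take $X = A+i$, $Y = B$ there). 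With that in hand, telescoping gives $F(X_{t+1}) < F(Y_t)$ when $D_t \neq \emptyset$ and $F(X_{t+1}) = F(Y_t)$ when $D_t = \emptyset$. Combining the two chains: $F(X_{t+1}) \le F(Y_t) \le F(X_t)$, with at least one inequality strict whenever $U_t \cup D_t \neq \emptyset$, which is precisely the non-termination condition of Step 3; hence $F(X_{t+1}) < F(X_t)$.

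The main obstacle is the second sub-step: one must be careful that removing $d_k$ from $Y_t$ really corresponds to a valid instance of the single sub-crossing inequality with the correct set inclusion and the correct element excluded from the larger set. Because $D_t \subseteq X_t \subseteq Y_t$ and we remove the $d_k$ in order, at stage $k$ the element $d_k$ still lies in $W_{k-1}$, and both $W_k$ and $Y_t - d_k$ exclude $d_k$; verifying $W_k \subseteq Y_t - d_k$ is immediate since $W_k = Y_t \setminus \{d_1,\dots,d_k\}$. The only genuinely delicate point is choosing the right one of the two (symmetric-looking but logically distinct) forms of the SSBC/(\ref{eq:3}) implications — the ``upward propagation of positive marginal gains'' form — and noting it is licensed by the remark immediately following Proposition 2. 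Everything else is telescoping.
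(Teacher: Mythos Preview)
Your overall strategy is correct and mirrors the paper's: split the iteration into the two sub-steps $X_t \to Y_t$ (adding $U_t$) and $Y_t \to X_{t+1}$ (removing $D_t$), and show each strictly decreases $F$ when the relevant set is nonempty. The paper organizes the first chain via argmin subsets $U^k$ of each cardinality, whereas you add elements in an arbitrary order; both work, and the key inequality is the same application of (\ref{eq:3}).

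However, your handling of the second sub-step contains a genuine slip. You correctly identify in words that you need ``larger set has positive marginal $\Rightarrow$ smaller set has positive marginal,'' i.e.\ $F(d_k \mid Y_t - d_k) > 0 \Rightarrow F(d_k \mid W_k) > 0$ with $W_k \subseteq Y_t - d_k$. This is precisely the \emph{contrapositive} of the non-strict part of (\ref{eq:3}): from $F(i\mid A)\le 0 \Rightarrow F(i\mid B)\le 0$ one gets $F(i\mid B)>0 \Rightarrow F(i\mid A)>0$. But the formal statement you then write down, $F(i\mid A)\ge(>)\,0 \Rightarrow F(i\mid B)\ge(>)\,0$ for $A\subseteq B$, is the \emph{opposite} implication, is \emph{not} what you need, and is in fact \emph{false} for quasi-submodular (even submodular) functions: take $N=\{1,2\}$, $F(\emptyset)=0$, $F(\{1\})=1$, $F(\{2\})=2$, $F(\{1,2\})=1$; then $F(1\mid\emptyset)=1>0$ but $F(1\mid\{2\})=-1<0$. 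Moreover, taking $X=A+i$, $Y=B$ in the QSB definition yields (\ref{eq:3}) again, not your ``symmetric statement.'' The fix is simply to invoke the contrapositive of (\ref{eq:3}) directly; no new derivation is needed, and with that correction your argument goes through cleanly.
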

\begin{proof}
We prove $F(Y_t) < F(X_t)$. $F(X_{t+1}) < F(Y_t)$ can be proved using a similar approach. Suppose $U_t \not= \emptyset$. Define $U^k \in \argmin_{U \subseteq U_t: |U| = k}{F(X_t \cup U)}$, and $Y_t^k = X_t \cup U^k$. According to the algorithm, $\forall u \in U_t \setminus U^k$, $F(u|X_t) < 0$. Since $X_t \subseteq Y_t^k$, and $u \not \in Y_t^k$, based on the SSBC property, we have $F(u | Y_t^k) < 0$. This implies $F(Y_t^{k+1}) \le F(X_t \cup (U^k+u))  < F(X_t \cup U^k) = F(Y_t^k)$. Note that $F(Y_t^1) = \min_{u \in U_t}{F(X_t + u)} < F(X_t) = F(Y_t^0)$. We then have $F(Y_t) = F(Y_t^{|U_t|}) < F(Y_t^{|U_t|-1}) < \cdots < F(Y_t^0) = F(X_t)$.
\end{proof}
If we start from $X_0 = Q_0 \triangleq \emptyset$, after one iteration, we will get $X_1 = Q_1 = \{i \ |\  F(i|\emptyset) < 0\}$. Similarly, if we start from $X_0 = S_0 \triangleq N$, we will get $X_1 = S_1 = \{i \ |\  F(i|N-i) \le 0\}$. Based on the SSBC property, we have $\forall i \in N, F(i|\emptyset) < 0 \Rightarrow F(i|N-i) < 0$, \emph{i.e.}, $Q_1 \subseteq S_1$. Thus the reduced lattice $\mathcal{L} = [Q_1,S_1] \subseteq [\emptyset,N]$ is not empty, and we show that it contains all the global minima.
\begin{lem}
Any global minimum of $F(X)$ is contained in the lattice $\mathcal{L} = [Q_1,S_1]$, i.e., $\forall X_* \in \argmin_{X \subseteq N}{F(X)}$, $Q_1 \subseteq X_* \subseteq S_1$.
\end{lem}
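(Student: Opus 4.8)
The plan is to prove the two containments $Q_1 \subseteq X_*$ and $X_* \subseteq S_1$ separately, each by contradiction, using only the weak notion (\ref{eq:3}) together with its contrapositive (which is all the algorithm relies on anyway), plus the defining inequality of a global minimum, namely $F(X_*) \le F(X)$ for every $X \subseteq N$.

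For the first containment, I would suppose toward a contradiction that some $i \in Q_1 \setminus X_*$. By the definition of $Q_1$ we have $F(i\,|\,\emptyset) < 0$. Since $\emptyset \subseteq X_*$ and $i \in N \setminus X_*$, the strict implication in (\ref{eq:3}) yields $F(i\,|\,X_*) < 0$, i.e.\ $F(X_* + i) < F(X_*)$, contradicting the global optimality of $X_*$. Hence no such $i$ exists and $Q_1 \subseteq X_*$.

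For the second containment, I would suppose toward a contradiction that some $i \in X_* \setminus S_1$. By the definition of $S_1$, $i \notin S_1$ means $F(i\,|\,N-i) > 0$. Now I apply the contrapositive of the non-strict implication in (\ref{eq:3}): taking $A = X_* - i \subseteq N - i = B$ with $i \in N \setminus (N-i)$, the implication $F(i\,|\,A) \le 0 \Rightarrow F(i\,|\,B) \le 0$ is equivalent to $F(i\,|\,B) > 0 \Rightarrow F(i\,|\,A) > 0$, so $F(i\,|\,N-i) > 0$ forces $F(i\,|\,X_*-i) > 0$, i.e.\ $F(X_* - i) < F(X_*)$, again contradicting global optimality. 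Thus every $i \in X_*$ lies in $S_1$, so $X_* \subseteq S_1$, and combining the two parts gives $Q_1 \subseteq X_* \subseteq S_1$, i.e.\ $X_* \in \mathcal{L}$.

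There is no real obstacle here; the only thing to be careful about is the bookkeeping of the direction of the monotone implication — the first part uses (\ref{eq:3}) in its ``forward'' form (a negative marginal gain at the empty set stays negative at the larger set $X_*$), whereas the second part uses the contrapositive form (a positive marginal gain at the large set $N-i$ propagates back to the smaller set $X_*-i$). As a byproduct, this argument re-derives the nonemptiness of $\mathcal{L}=[Q_1,S_1]$, since any global minimum $X_*$ already witnesses $Q_1 \subseteq X_* \subseteq S_1$.
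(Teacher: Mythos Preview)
Your argument is correct and follows essentially the same approach as the paper: a contradiction argument for each containment, using the marginal-gain implication (\ref{eq:3}) (forward for $Q_1 \subseteq X_*$, contrapositive for $X_* \subseteq S_1$) to produce a set with strictly smaller value than $X_*$. The paper only spells out the $Q_1 \subseteq X_*$ direction and leaves the other as ``similar'', whereas you write both out explicitly, but the underlying proof is the same.
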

\begin{proof}
We prove $Q_1 \subseteq X_*$. $X_* \subseteq S_1$ can be proved in a similar way. Suppose $Q_1 \not \subseteq X_*$, \emph{i.e.}, $\exists \ u \in Q_1$, $u \not \in X_*$. According to the definition of $Q_1$, $F(u|\emptyset) < 0 $. Since $\emptyset \subseteq X_*$, based on the SSBC property, we have $F(u|X_*) < 0$, which implies $F(X_*+u) < F(X_*)$. This contradicts the optimality of $X_*$.
\end{proof}

If we start Algorithm $1$ from $X_0 = Q_0 = \emptyset$, suppose we get $Q_t$ after $t$ iterations. It is easy to check that, due to the SSBC property, in each iteration, $Q_t$ only adds elements. So we get a chain $\emptyset = Q_0 \subseteq Q_1 \subseteq \cdots \subseteq Q_t \subseteq \cdots \subseteq Q_+$, where $Q_+$ is the final output when the algorithm terminates. Similarly, if we start from $X_0 = S_0 = N$, we can get another chain $S_+ \subseteq \cdots \subseteq S_t \subseteq \cdots \subseteq S_1 \subseteq S_0 = N$. We then prove that the endpoint sets of the two chains form a lattice, which contains all the local minima of $F$.

\begin{lem}
Any local minimum of $F(X)$ is contained in the lattice $\mathcal{L} = [Q_+,S_+]$.
\end{lem}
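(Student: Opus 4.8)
The plan is to prove the two containments $Q_+ \subseteq X$ and $X \subseteq S_+$ separately for an arbitrary local minimum $X$ of $F$, each by a short induction along the corresponding chain of working sets produced by Algorithm~$1$, using only the weak notion (\ref{eq:3}) (and its contrapositive) together with the definition of a local minimum.

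First I would handle $Q_+ \subseteq X$. Running Algorithm~$1$ from $Q_0 = \emptyset$ produces $\emptyset = Q_0 \subseteq Q_1 \subseteq \cdots \subseteq Q_+$, and I claim $Q_t \subseteq X$ for every $t$. The base case $Q_0 = \emptyset \subseteq X$ is immediate. For the step, assume $Q_t \subseteq X$, and let $U_t = \{u \in N \setminus Q_t \mid F(u|Q_t) < 0\}$ and $Y_t = Q_t \cup U_t$ be as in Step~$1$. Take any $u \in U_t$ and suppose for contradiction that $u \notin X$. Then $Q_t \subseteq X$, $u \in N \setminus X$, and $F(u|Q_t) < 0$, so (\ref{eq:3}) yields $F(u|X) < 0$, i.e.\ $F(X+u) < F(X)$, contradicting that $X$ is a local minimum. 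Hence $U_t \subseteq X$, so $Y_t \subseteq X$, and therefore $Q_{t+1} = Y_t \setminus D_t \subseteq Y_t \subseteq X$. Since $Q_+$ equals $Q_t$ for the terminal index, $Q_+ \subseteq X$.

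The containment $X \subseteq S_+$ is handled the same way along the other chain. Running Algorithm~$1$ from $S_0 = N$ produces $S_+ \subseteq \cdots \subseteq S_t \subseteq \cdots \subseteq S_0 = N$, and I claim $X \subseteq S_t$ for every $t$. The base case $X \subseteq N = S_0$ is immediate. For the step, assume $X \subseteq S_t$ and write $Y_t = S_t \cup U_t$ and $D_t = \{d \in S_t \mid F(d|Y_t - d) > 0\}$ as in the algorithm, so that $S_{t+1} = Y_t \setminus D_t$ and $X \subseteq S_t \subseteq Y_t$. If some $d \in X$ belonged to $D_t$, then $d \in Y_t$, $X - d \subseteq Y_t - d$, and $d \notin Y_t - d$, so the contrapositive of (\ref{eq:3}) applied to $F(d|Y_t - d) > 0$ would give $F(d|X-d) > 0$, i.e.\ $F(X-d) < F(X)$, again contradicting local minimality of $X$. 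Hence $D_t \cap X = \emptyset$, so $X \subseteq Y_t \setminus D_t = S_{t+1}$; taking the terminal index gives $X \subseteq S_+$. Combining the two containments, $X \in [Q_+, S_+] = \mathcal{L}$; in particular $\mathcal{L}$ is nonempty, since a global minimum is a local minimum.

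Everything here is elementary once the two inductions are set up; the only point that needs attention is the direction in which the single sub-crossing property is invoked --- for the lower endpoint $Q_+$ one uses (\ref{eq:3}) directly on an element being \emph{added} to the candidate, whereas for the upper endpoint $S_+$ one uses its contrapositive on an element being \emph{removed}. I expect this bookkeeping of which implication goes which way (and checking the inclusion/membership hypotheses $A\subseteq B$, $i\notin B$ each time) to be the main, if minor, obstacle; no submodularity is needed, only the weak form (\ref{eq:3}), hence the conclusion also holds for quasi-submodular $F$ by Proposition~2.
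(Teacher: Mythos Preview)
Your proof is correct and follows essentially the same induction-along-the-chains argument as the paper: assume $Q_t \subseteq X$ (resp.\ $X \subseteq S_t$), use the SSBC property on a hypothetical violating element to contradict local minimality, and conclude $Q_{t+1} \subseteq X$ (resp.\ $X \subseteq S_{t+1}$). The only cosmetic difference is that the paper writes $Q_{t+1} = Q_t \cup U_t$ directly (relying on the earlier remark that $D_t=\emptyset$ along the $Q$-chain), whereas you more cautiously pass through $Q_{t+1} = Y_t \setminus D_t \subseteq Y_t \subseteq X$; both are fine.
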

\begin{proof}
Let $P$ be a local minimum. In the proof of Lemma $2$, we use singleton elements to construct contradictions, so we have $Q_1 \subseteq P \subseteq S_1$. Suppose $Q_t \subseteq P \subseteq S_t$, we then prove $Q_{t+1} \subseteq P \subseteq S_{t+1}$. First, we suppose $Q_{t+1} \not \subseteq P$. Because $Q_{t+1} = Q_t \cup U_t$, $\exists \ u \in U_t$, $u \not \in P$. According to the definition of $U_t$, $F(u|Q_t) < 0$. Since $Q_t \subseteq P$, based on the SSBC property, we have $F(u|P) < 0$. This indicates $F(P+u) < F(P)$, which contradicts the local optimality of $P$. Hence $Q_{t+1} \subseteq P$. And $P \subseteq S_{t+1}$ can be proved in a similar way.
\end{proof}

Moreover, the two endpoint sets $Q_+$ and $S_+$ are local minima.

\begin{lem}
$Q_+$ and $S_+$ are local minima of $F(X)$.
\end{lem}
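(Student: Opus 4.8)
The plan is to unwind the definitions and observe that the stopping condition of Algorithm~1 is \emph{literally} the conjunction of the two inequalities defining a local minimum. First I would record the equivalence noted parenthetically in Step~3: the algorithm halts at a working set $X_t$ exactly when $U_t=D_t=\emptyset$. The one direction that needs an argument is that $X_{t+1}=X_t$ forces both sets empty, and for this it is enough to see $U_t\cap D_t=\emptyset$, which is precisely where condition~(3) is used: any $u\in U_t$ has $F(u|X_t)<0$ and satisfies $X_t\subseteq Y_t-u$ with $u\notin Y_t-u$, so $F(u|Y_t-u)<0$ and hence $u\notin D_t$. Since $D_t\subseteq Y_t=X_t\cup U_t$ is then disjoint from $U_t$, we get $D_t\subseteq X_t$, so $(X_t\cup U_t)\setminus D_t=(X_t\setminus D_t)\cup U_t$, which can equal $X_t$ only when $U_t=\emptyset$ and $D_t=\emptyset$.

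Next, let $Q_+$ and $S_+$ be the sets returned by the runs started from $X_0=\emptyset$ and $X_0=N$ respectively; both are well defined because the algorithm terminates, and the argument is the same for either, so I treat $Q_+$. At the terminal iteration, $U_+=\emptyset$ means $F(j|Q_+)\ge 0$ for every $j\in N\setminus Q_+$, \emph{i.e.}, $F(Q_++j)\ge F(Q_+)$ --- the ``adding an outside element'' half of the definition of local minimum. Because $U_+=\emptyset$, the set formed in Step~1 of that iteration is $Y_+=Q_+\cup U_+=Q_+$, so the terminal $D_+$ equals $\{d\in Q_+ : F(d|Q_+-d)>0\}$; thus $D_+=\emptyset$ gives $F(d|Q_+-d)\le 0$, \emph{i.e.}, $F(Q_+-d)\ge F(Q_+)$, for every $d\in Q_+$ --- the ``deleting an element'' half. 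Hence $Q_+$ satisfies the definition of local minimum stated just before Lemma~1, and the identical reading of the stopping condition for the run from $N$ shows $S_+$ is a local minimum.

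The proof is therefore almost pure bookkeeping --- a translation between the marginal-gain notation $F(i|A)=F(A+i)-F(A)$ and the set-value inequalities in the definition of a local minimum. The only genuinely non-routine point is the equivalence ``halts $\iff U_t=D_t=\emptyset$'', \emph{i.e.}, the disjointness $U_t\cap D_t=\emptyset$, and that is a one-line consequence of the weak sub-crossing condition~(3); so I expect it to be the main (but short) obstacle. A small thing to be careful about is that $D_+$ is evaluated against $Y_+$, not against the previous working set, and one must note $Y_+=Q_+$ exactly because $U_+=\emptyset$.
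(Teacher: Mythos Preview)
Your proof is correct, and for the ``removing an element'' half it is actually cleaner than the paper's. One small over-complication: in Algorithm~1 the sets $U_t\subseteq N\setminus X_t$ and $D_t\subseteq X_t$ are disjoint \emph{by definition}, so invoking condition~(3) to separate them is unnecessary (you may be thinking of Algorithm~2, where both $U_t$ and $D_t$ live in $Y_t\setminus X_t$ and Lemma~5 genuinely needs SSBC). Likewise, $D_t\subseteq X_t$ is part of the definition, not something you need to deduce. With that, the equivalence in Step~3 is pure set arithmetic.

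The paper takes a different route for the inequality $F(Q_+-j)\ge F(Q_+)$. Instead of reading it off from $D_+=\emptyset$ together with $Y_+=Q_+$, it argues by contradiction: if $F(j|Q_+-j)>0$, trace $j$ back to the iteration $t$ at which $j\in U_t$, use $Q_t\subseteq Q_+-j$ and the contrapositive of~(3) to get $F(j|Q_t)>0$, contradicting $j\in U_t$. This relies on the monotone chain $Q_0\subseteq Q_1\subseteq\cdots$, which is specific to the initializations $X_0=\emptyset$ and $X_0=N$. Your argument, by contrast, uses no quasi-submodularity at all once the stopping condition is unpacked, and it shows that the output of Algorithm~1 is a local minimum for \emph{any} initialization $X_0$, which is a slightly stronger conclusion than the lemma as stated.
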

\begin{proof}
We prove for $Q_+$. The $S_+$ case is similar. According to the algorithm, $\forall i \in N \setminus Q_+$, $F(i|Q_+) \ge 0$. If $\exists \ j \in Q_+$, such that $F(j|Q_+ - j) > 0$, then we can suppose $j$ was added into $Q_+$ at a previous iteration $t$. Since $Q_t-j \subseteq Q_+ - j$, based on the SSBC property, we have $F(j|Q_t-j) > 0$. This contradicts the proof of Lemma $1$.
\end{proof}

Because a global minimum is also a local minimum, Lemma $3$ results in the following theorem.

\begin{thm}
Any global minimum of $F(X)$ is contained in the lattice $\mathcal{L} = [Q_+,S_+]$, i.e., $\forall X_* \in \argmin_{X \subseteq N}{F(X)}$, $Q_+ \subseteq X_* \subseteq S_+$.
\end{thm}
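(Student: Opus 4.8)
The plan is to obtain Theorem 1 as an immediate corollary of Lemma 3. The bridging observation is that every global minimum of $F$ is in particular a local minimum in the two-sided sense used throughout Section 3. Indeed, if $X_* \in \argmin_{X \subseteq N} F(X)$, then $F(X_*) \le F(Y)$ for every $Y \subseteq N$; instantiating this with $Y = X_* - i$ for each $i \in X_*$ and with $Y = X_* + j$ for each $j \in N \setminus X_*$ gives precisely $F(X_* - i) \ge F(X_*)$ and $F(X_* + j) \ge F(X_*)$, which is exactly the stated definition of a local minimum. (Since $N$ is finite, $\argmin_{X \subseteq N} F(X)$ is nonempty, so the claim is not vacuous.) Having established that $X_*$ is a local minimum, I would simply invoke Lemma 3, which asserts that every local minimum lies in $\mathcal{L} = [Q_+, S_+]$; this yields $Q_+ \subseteq X_* \subseteq S_+$, the desired conclusion.

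If a self-contained argument avoiding the notion of local optimum were preferred, one could instead re-run the induction from the proofs of Lemma 2 and Lemma 3 directly against $X_*$: show by induction on $t$ that $Q_t \subseteq X_* \subseteq S_t$. At the inductive step, a violation $u \in U_t \setminus X_*$ would satisfy $F(u \mid Q_t) < 0$, hence, by the single sub-crossing property applied with $Q_t \subseteq X_*$, $F(u \mid X_*) < 0$, i.e. $F(X_* + u) < F(X_*)$, contradicting the global optimality of $X_*$; the symmetric argument with $D_t$ and $S_t$ (using $X_* \subseteq S_t$) handles the reverse inclusion. Passing to the terminal sets of the two chains gives $Q_+ \subseteq X_* \subseteq S_+$.

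I do not anticipate a genuine obstacle: the theorem follows from Lemma 3 together with the trivial implication ``global minimum $\Rightarrow$ local minimum''. The only points requiring mild care are to use the exact (two-sided, single-element) definition of local minimum from the text so that the implication is clean, and to note that Lemma 4 is not needed here — it supplies the complementary fact that the endpoints $Q_+$, $S_+$ are themselves local minima, which strengthens the interpretation of the reduced lattice but plays no role in this particular statement.
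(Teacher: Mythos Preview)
Your proposal is correct and matches the paper's own argument essentially verbatim: the paper simply states that a global minimum is also a local minimum and then invokes Lemma~3 to conclude. Your optional self-contained inductive argument is also valid but unnecessary here.
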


\section{Unconstrained Quasi-Submodular Function Maximization}

Unconstrained submodular function minimization problems can be exactly optimized in polynomial time \cite{orlin2009faster}. Yet unconstrained submodular maximization is NP-hard \cite{feige2011maximizing}. The best approximation ratio for unconstrained nonnegative submodular maximization is $1/2$ \cite{buchbinder2012tight}, which matches the known hardness result \cite{feige2011maximizing}. As a strict superset of submodular case, unconstrained quasi-submodular maximization is definitely NP-hard.

Iyer et al. \cite{iyer2013fast} also propose a discrete Minorization-Maximization like submodular maximization algorithm. They employ the permutation based subdifferential \cite{fujishige2005submodular} to construct tight modular lower bounds, and maximize the lower bounds as surrogate functions. With different permutation strategies, their algorithm actually mimics several existing approximation algorithms, which means their algorithm does not really reduce the lattices in optimization. In addition, for quasi-submodular cases, it also can be verified that the lower bounds in \cite{iyer2013fast} are no longer bounds, and quasi-submodular functions have no known subdifferential. Thus, even generalizing their algorithm is impossible.

We find Buchbinder et al. \cite{buchbinder2012tight} propose a simple linear time approximation method. The algorithm maintains two working sets, $S_1$ and $S_2$, and $S_1 \subseteq S_2$. At the start, $S_1 = \emptyset$ and $S_2 = N$. Then at each iteration, one element $i \in S_2 \setminus S_1$ is queried to compute its marginal gains over the two working sets, \emph{i.e.}, $F(i|S_1)$ and $F(i|S_2-i)$. If $F(i|S_1) + F(i|S_2-i) \ge 0$, then $S_1 \leftarrow S_1 + i$, otherwise $S_2 \leftarrow S_2 - i$. After $n$ iterations, the algorithm outputs $S_1 = S_2$. This algorithm is efficient and reaches an approximation ratio of $1/3$. However, the approximate algorithm may mistakenly remove a certain element $e \in X_*$ from $S_2$, or add an element $u \not \in X_*$ into $S_1$. Here, $X_*$ is referred to as a global maximum. Consequently, the working lattices of their algorithm may not contain the global optima.

By contrast, we want to reduce the lattices after each iteration while avoid taking erroneous steps. Fortunately, we find that if we simultaneously maintain two working sets at each iteration, and take steps in a "crossover" method, quasi-submodularity can provide theoretical guarantees that the output lattices definitely contain all the global maxima. Hence, we propose the following algorithm.

\begin{algorithm}[h]
\caption{Unconstrained Quasi-Submodular Function Maximization (UQSFMax)}
\label{algo2}
\begin{algorithmic}[1]
\Require
Quasi-submodular function $F$, $N=\{1,2,...,n\}$, $X_0 \leftarrow \emptyset$, $Y_0 \leftarrow N$, $t \leftarrow 0$.
\Ensure
Lattice $[X_t,Y_t]$.
\State At Iteration $t$, find $U_t = \{u \in Y_t \setminus X_t \ |\  F(u|Y_t-u) > 0\}$. $X_{t+1} \leftarrow X_t \cup U_t$.
\State Find $D_t = \{d \in Y_t \setminus X_t \ |\ F(d|X_t) < 0\}$. $Y_{t+1} \leftarrow Y_t \setminus D_t$.
\State If $X_{t+1} = X_t$ and $Y_{t+1} = Y_t$, stop and output $[X_t,Y_t]$.
\State $t \leftarrow t+1$. Back to Step $1$.
\end{algorithmic}
\end{algorithm}

To ensure the result lattice is not empty, we prove that after each iteration Algorithm $2$ maintains a nonempty lattice as the following lemma shows.
\begin{lem}
At each iteration of Algorithm $2$, the lattice $[X_t,Y_t]$ is not empty, i.e., $\forall t$, $X_t \subseteq Y_t$.
\end{lem}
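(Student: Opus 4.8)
The plan is to argue by induction on $t$. The base case is immediate: by initialization $X_0 = \emptyset \subseteq N = Y_0$, so the lattice $[X_0,Y_0]$ is nonempty. For the inductive step I assume $X_t \subseteq Y_t$ and must show $X_{t+1} \subseteq Y_{t+1}$, i.e. $X_t \cup U_t \subseteq Y_t \setminus D_t$.

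First I would dispose of the easy containments. Since $U_t \subseteq Y_t \setminus X_t \subseteq Y_t$ and, by hypothesis, $X_t \subseteq Y_t$, we get $X_{t+1} = X_t \cup U_t \subseteq Y_t$. Also $D_t \subseteq Y_t \setminus X_t$, so $D_t \cap X_t = \emptyset$. Hence the whole inductive step reduces to proving the single fact $U_t \cap D_t = \emptyset$: once that holds, $X_{t+1} = X_t \cup U_t$ is disjoint from $D_t$ and contained in $Y_t$, so $X_{t+1} \subseteq Y_t \setminus D_t = Y_{t+1}$.

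To show $U_t \cap D_t = \emptyset$, suppose toward a contradiction that some $e$ lies in both. Then $e \in Y_t \setminus X_t$ with $F(e \mid Y_t - e) > 0$ (from $e \in U_t$) and $F(e \mid X_t) < 0$ (from $e \in D_t$). Since $e \notin X_t$ we have $X_t \subseteq Y_t - e$, and $e \notin Y_t - e$, so I can apply the weaker notion (\ref{eq:3}) with $A = X_t$, $B = Y_t - e$, $i = e$: from $F(e \mid X_t) < 0$ it follows that $F(e \mid Y_t - e) < 0$, contradicting $F(e \mid Y_t - e) > 0$. Therefore no such $e$ exists and $U_t \cap D_t = \emptyset$, completing the induction.

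The only subtlety — and the step I would flag as the crux — is recognizing that the ``crossover'' design of Algorithm $2$ (testing $F(u\mid Y_t - u)$ to grow $X$ but $F(d\mid X_t)$ to shrink $Y$) is exactly what lets the monotone-marginal implication (\ref{eq:3}) force $U_t$ and $D_t$ apart; everything else is bookkeeping with set inclusions. It is worth noting explicitly that we use only (\ref{eq:3}), so the lemma holds under the weaker hypothesis and a fortiori for quasi-submodular $F$.
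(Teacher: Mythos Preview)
Your proof is correct and follows essentially the same route as the paper's: induction on $t$, reduction of the inductive step to $U_t \cap D_t = \emptyset$, and use of the marginal-gain implication (\ref{eq:3}) (equivalently the contrapositive of the SSBC consequence) with $A = X_t$, $B = Y_t - e$ to rule out any common element. The only cosmetic difference is that the paper argues directly from $F(u\mid Y_t-u)>0$ to $F(u\mid X_t)>0$, whereas you phrase it as a contradiction starting from $F(e\mid X_t)<0$; the underlying step is identical.
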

\begin{proof}
According to the definition, we have $X_0 \subseteq Y_0$. Suppose $X_t \subseteq Y_t$, we then prove $X_{t+1} \subseteq Y_{t+1}$. Because $U_t, \ D_t \subseteq Y_t \setminus X_t$, if we prove $U_t \cap D_t = \emptyset$, $X_{t+1} \subseteq Y_{t+1}$ will be satisfied. According to the algorithm, $\forall u \in U_t$, $F(u|Y_t-u) >0$. Since $X_t \subseteq Y_t-u$, and $u \not \in Y_t-u$, based on the SSBC property, we have $F(u|X_t) > 0$, which implies $u \not \in D_t$.
\end{proof}
Algorithm $2$ also has several very favorable theoretical guarantees. First, the objective function values are strictly increased after each iteration, as the following lemma states.
\begin{lem}
After each except the last iteration of Algorithm $2$, the objective function values of endpoint sets of lattice $[X_t,Y_t]$ are strictly monotonically increased, i.e., $\forall t$, $F(X_{t+1}) > F(X_t)$ or $F(Y_{t+1}) > F(Y_t)$.
\end{lem}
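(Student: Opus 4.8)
The plan is to split the statement into two one-sided claims and then combine them: \textbf{(a)} whenever $U_t \neq \emptyset$ we have $F(X_{t+1}) > F(X_t)$, and \textbf{(b)} whenever $D_t \neq \emptyset$ we have $F(Y_{t+1}) > F(Y_t)$. Since $U_t, D_t \subseteq Y_t \setminus X_t$, the stopping test ``$X_{t+1} = X_t$ and $Y_{t+1} = Y_t$'' is equivalent to $U_t = D_t = \emptyset$, so in a non-final iteration at least one of $U_t, D_t$ is nonempty and (a) or (b) yields the asserted disjunction. The only tools needed beyond bookkeeping are the single sub-crossing property in its marginal form (\ref{eq:3}) together with its contrapositive — for $A \subseteq B \subseteq N$ and $i \in N \setminus B$, one has $F(i|A) < 0 \Rightarrow F(i|B) < 0$ and, contrapositively, $F(i|B) > 0 \Rightarrow F(i|A) > 0$ — and Lemma 5, which guarantees $X_t \subseteq Y_t$.

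For (a), I would enumerate $U_t = \{u_1, \dots, u_k\}$ in an arbitrary order, set $X_t^{(0)} = X_t$ and $X_t^{(j)} = X_t \cup \{u_1, \dots, u_j\}$, so that $X_t^{(k)} = X_{t+1}$, and then show $F(u_{j+1} \,|\, X_t^{(j)}) > 0$ for every $0 \le j < k$. This telescopes to $F(X_t) = F(X_t^{(0)}) < F(X_t^{(1)}) < \cdots < F(X_t^{(k)}) = F(X_{t+1})$. The step inequality is obtained as follows: $u_{j+1} \in U_t$ gives $F(u_{j+1} \,|\, Y_t - u_{j+1}) > 0$; by Lemma 5 and the fact that the $u_i$ are distinct elements of $Y_t \setminus X_t$, we have $X_t^{(j)} \subseteq Y_t - u_{j+1}$ with $u_{j+1} \notin X_t^{(j)}$; applying the contrapositive form of (\ref{eq:3}) with $A = X_t^{(j)}$, $B = Y_t - u_{j+1}$, $i = u_{j+1}$ yields $F(u_{j+1} \,|\, X_t^{(j)}) > 0$, hence $F(X_t^{(j+1)}) > F(X_t^{(j)})$. (If one prefers the bookkeeping used in the proof of Lemma 1, one may instead take $X_t^{(j)}$ to be a maximiser of $F(X_t \cup U)$ over $U \subseteq U_t$ with $|U| = j$; the same step goes through verbatim.)

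For (b), I would argue dually: enumerate $D_t = \{d_1, \dots, d_m\}$, set $Y_t^{(0)} = Y_t$ and $Y_t^{(j)} = Y_t \setminus \{d_1, \dots, d_j\}$, so $Y_t^{(m)} = Y_{t+1}$, and show $F(d_{j+1} \,|\, Y_t^{(j)} - d_{j+1}) < 0$ for each $j$. Then $F(Y_t^{(j)}) = F\big((Y_t^{(j)} - d_{j+1}) + d_{j+1}\big) < F(Y_t^{(j)} - d_{j+1}) = F(Y_t^{(j+1)})$, which telescopes to $F(Y_t) < F(Y_{t+1})$. Here $d_{j+1} \in D_t$ gives $F(d_{j+1} \,|\, X_t) < 0$; since $\{d_1, \dots, d_{j+1}\} \subseteq D_t \subseteq Y_t \setminus X_t$ and $X_t \subseteq Y_t$, we get $X_t \subseteq Y_t^{(j)} - d_{j+1}$ with $d_{j+1} \notin Y_t^{(j)} - d_{j+1}$, and the ``$<$'' part of (\ref{eq:3}) with $A = X_t$, $B = Y_t^{(j)} - d_{j+1}$, $i = d_{j+1}$ gives the required $F(d_{j+1} \,|\, Y_t^{(j)} - d_{j+1}) < 0$.

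I expect the only delicate point to be verifying, at each step of the two chains, that the set in the role of $A$ is genuinely contained in the set in the role of $B$ and that the incremented (resp. decremented) element lies outside $B$, so that the single sub-crossing property is legitimately applicable; these containments hinge exactly on Lemma 5 ($X_t \subseteq Y_t$) and on $U_t, D_t \subseteq Y_t \setminus X_t$, i.e. on the ``crossover'' design of Algorithm 2. Everything else is the routine telescoping described above.
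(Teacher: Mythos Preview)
Your proof is correct and follows essentially the same strategy as the paper: reduce to the two one-sided claims and then add (or remove) the elements of $U_t$ (or $D_t$) one at a time, using the single sub-crossing property to guarantee strict increase at each step. The only difference is cosmetic bookkeeping: the paper defines $X_t^k$ via $U^k \in \argmax_{U \subseteq U_t,\,|U|=k} F(X_t \cup U)$ and uses the $\argmax$ to absorb the possible mismatch between $U^k + u$ and $U^{k+1}$, whereas you add the $u_j$ in an arbitrary order and telescope directly---your version is slightly leaner, since the contrapositive of (\ref{eq:3}) already gives strict increase at every step without the $\argmax$ detour (as you yourself note parenthetically).
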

\begin{proof}
We prove $F(X_{t+1}) > F(X_t)$. $F(Y_{t+1}) > F(Y_t)$ can be proved using a similar approach. Suppose $U_t \not= \emptyset$. Define $U^k \in \argmax_{U \subseteq U_t: |U| = k}{F(X_t \cup U)}$, and $X_t^k = X_t \cup U^k$. According to the algorithm, $\forall u \in U_t \setminus U^k$, $F(u|Y_t-u) > 0$. Since $X_t^k \subseteq Y_t-u$, and $u \not \in Y_t-u$, based on the SSBC property, we have $F(u|X_t^k) > 0$. This indicates $F(X_t^{k+1}) \ge F(X_t \cup (U^k+u)) > F(X_t \cup U^k) = F(X_t^k)$. Note that $F(X_t^1) = \max_{u \in U_t}{F(X_t + u)} > F(X_t) = F(X_t^0)$. We then have $F(X_{t+1}) = F(X_t^{|U_t|}) > F(X_t^{|U_t|-1}) > \cdots > F(X_t^0) = F(X_t)$.
\end{proof}
After the first iteration of Algorithm $2$, we get $X_1 = \{i \ |\  F(i|N-i) > 0\}$, and $Y_1 = \{i \ |\ F(i|\emptyset) \ge 0\}$. Based on Lemma $5$, we have $X_1 \subseteq Y_1$. Thus the reduced lattice $\mathcal{L} = [X_1,Y_1] \subseteq [\emptyset,N]$ is not empty, and we show that it contains all the global maxima.
\begin{lem}
Any global maximum of $F(X)$ is contained in the lattice $\mathcal{L} = [X_1,Y_1]$, i.e., $\forall X_* \in \argmax_{X \subseteq N}{F(X)}$, $X_1 \subseteq X_* \subseteq Y_1$.
\end{lem}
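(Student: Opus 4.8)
The plan is to establish the two inclusions $X_1 \subseteq X_*$ and $X_* \subseteq Y_1$ separately, each by contradiction, in exact analogy with the proof of Lemma~2 for the minimization case, and using only the weak notion~(\ref{eq:3}). Recall that after the first iteration of Algorithm~2 (started from $X_0=\emptyset$, $Y_0=N$) we have $X_1=\{i \mid F(i\mid N-i)>0\}$ and $Y_1=\{i \mid F(i\mid\emptyset)\ge 0\}$. Fix an arbitrary global maximum $X_*$.

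For the inclusion $X_1\subseteq X_*$: suppose it fails, so there is some $u\in X_1$ with $u\notin X_*$. By the definition of $X_1$ we have $F(u\mid N-u)>0$. Since $u\notin X_*$, the global maximum satisfies $X_*\subseteq N-u$ and $u\in N\setminus(N-u)$, so I would apply the contrapositive of~(\ref{eq:3}) — namely $F(i\mid B)>0 \Rightarrow F(i\mid A)>0$ whenever $A\subseteq B$ and $i\in N\setminus B$ — with $A=X_*$, $B=N-u$, $i=u$. This yields $F(u\mid X_*)>0$, i.e.\ $F(X_*+u)>F(X_*)$, contradicting the optimality of $X_*$.

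For the inclusion $X_*\subseteq Y_1$: suppose it fails, so there is some $d\in X_*$ with $d\notin Y_1$. By the definition of $Y_1$ this means $F(d\mid\emptyset)<0$ (the negation of $F(d\mid\emptyset)\ge 0$). Since $\emptyset\subseteq X_*-d$ and $d\in N\setminus(X_*-d)$, I would apply~(\ref{eq:3}) directly with $A=\emptyset$, $B=X_*-d$, $i=d$, obtaining $F(d\mid X_*-d)<0$, i.e.\ $F(X_*)<F(X_*-d)$, which again contradicts the optimality of $X_*$. Combining the two inclusions gives $X_1\subseteq X_*\subseteq Y_1$.

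The argument is short and I do not expect a serious obstacle; the only point needing care is that $X_1$ and $Y_1$ are generated from marginal-gain conditions evaluated at the two extreme sets $N-i$ and $\emptyset$ respectively, so the two inclusions invoke~(\ref{eq:3}) in \emph{opposite} directions — once as stated (pushing a negative marginal up from $\emptyset$ to $X_*-d$) and once via its contrapositive (pushing a positive marginal down from $N-u$ to $X_*$). Accordingly one must check in each case that the global maximum $X_*$ really lies in the interval required to apply~(\ref{eq:3}) (below $N-u$ in the first case, above $X_*-d$ in the second) and that the relevant element lies outside the larger set, which is immediate from $u\notin X_*$ and $d\in X_*$.
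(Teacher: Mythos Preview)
Your proof is correct and follows essentially the same approach as the paper's own proof: both argue by contradiction using singleton elements and the SSBC property (equivalently, the weak notion~(\ref{eq:3}) and its contrapositive) to derive a set with strictly larger value than $X_*$. The only difference is cosmetic --- the paper writes out only the $X_1\subseteq X_*$ direction and leaves the other as ``similar,'' whereas you spell out both inclusions and explicitly flag which direction of~(\ref{eq:3}) is invoked in each.
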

\begin{proof}
We prove $X_1 \subseteq X_*$. $X_* \subseteq Y_1$ can be proved in a similar way. Suppose $X_1 \not \subseteq X_*$, \emph{i.e.}, $\exists \ u \in X_1$, $u \not \in X_*$. According to the definition, $F(u|N-u) > 0 $. Since $X_* \subseteq N-u$, based on the SSBC property, we have $F(u|X_*) > 0$, that is $F(X_*+u) > F(X_*)$. This contradicts the optimality of $X_*$.
\end{proof}
At each iteration of Algorithm $2$, due to the SSBC property, $X_t$ only adds elements and $Y_t$ only removes elements. Thus we have $X_t \subseteq X_{t+1}$ and $Y_{t+1} \subseteq Y_t$, \emph{i.e.}, $\forall t$, $[X_{t+1},Y_{t+1}] \subseteq [X_t,Y_t]$. We denote the output lattice of Algorithm $2$ as $[X_+,Y_+]$. Then $[X_+,Y_+]$ is the smallest lattice in the chain which consists of the working lattices: $[X_+,Y_+] \subseteq \cdots \subseteq [X_t,Y_t] \subseteq \cdots \subseteq [X_1,Y_1] \subseteq [X_0,Y_0] = [\emptyset,N]$. Based on Lemma $5$, $[X_+,Y_+]$ is not empty, then we prove that it contains all the global maxima of $F$.

\begin{thm}
Suppose Algorithm $2$ outputs lattice $[X_+,Y_+]$. Any global maximum of $F(X)$ is contained in the lattice $\mathcal{L} = [X_+,Y_+]$, i.e., $\forall X_* \in \argmax_{X \subseteq N}{F(X)}$, $X_+ \subseteq X_* \subseteq Y_+$.
\end{thm}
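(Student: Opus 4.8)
The plan is to mimic the structure used to pass from Lemma~2 to Theorem~1 in the minimization case: first establish the single-step invariant ``every global maximum lies in the current working lattice,'' then induct along the chain $[X_0,Y_0] \supseteq [X_1,Y_1] \supseteq \cdots \supseteq [X_+,Y_+]$. Lemma~6 already gives the base case, namely $X_1 \subseteq X_* \subseteq Y_1$ for any global maximum $X_*$; in fact, since $[X_0,Y_0]=[\emptyset,N]$ trivially contains $X_*$, one could even start the induction at $t=0$.

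Concretely, I would fix a global maximum $X_*$ and prove by induction on $t$ that $X_t \subseteq X_* \subseteq Y_t$ for all $t$, which immediately yields the theorem on taking $t$ large enough that $[X_t,Y_t]=[X_+,Y_+]$. For the inductive step, assume $X_t \subseteq X_* \subseteq Y_t$. To show $X_{t+1} \subseteq X_*$, recall $X_{t+1}=X_t \cup U_t$; if some $u \in U_t$ had $u \notin X_*$, then by definition of $U_t$ we have $F(u|Y_t-u) > 0$, and since $X_* \subseteq Y_t$ with $u \notin X_*$ we get $X_* \subseteq Y_t - u$, so the SSBC property (specifically the strict implication of~(\ref{eq:3}), applied contrapositively) gives $F(u|X_*) > 0$, i.e.\ $F(X_*+u) > F(X_*)$, contradicting the optimality of $X_*$. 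Symmetrically, to show $X_* \subseteq Y_{t+1}=Y_t \setminus D_t$: if some $d \in D_t$ had $d \in X_*$, then $F(d|X_t) < 0$, and since $X_t \subseteq X_*-d$ the SSBC property yields $F(d|X_*-d) < 0$, i.e.\ $F(X_*) < F(X_*-d)$, again contradicting optimality. This closes the induction.

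The main thing to be careful about — rather than a genuine obstacle — is the direction in which SSBC is applied. For the $X_{t+1}$ half, we use that $F(u|Y_t-u)>0$ together with $X_t^{\text{(here }X_*)} \subseteq Y_t-u$ propagates \emph{downward} to the smaller set, which is exactly the form~(\ref{eq:3}) read contrapositively (negative/zero marginal gain on a large set forces it on a small set, so positive on the small set is consistent and positive on... ) — in fact the cleanest route is to invoke~(\ref{eq:1}) directly with $X = Y_t - u$, $Y = X_* + u$, as was done in Lemma~5 and Lemma~6. For the $Y_{t+1}$ half, we instead use the SSBC property in its defining form~(\ref{eq:2}) with $A = X_t$, $B = X_*-d$, $i=d$: $F(X_t) \ge F(X_*-d)$ is not what we know; rather we know $F(i|A) = F(d|X_t) < 0$, and~(\ref{eq:3}) then forces $F(d|X_*-d) < 0$. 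Both halves are one-line applications once the set inclusions $X_t \subseteq X_* - d$ and $X_* \subseteq Y_t - u$ are checked, and these follow from the inductive hypothesis together with $u \notin X_*$, $d \in X_*$. No monotonicity of the objective along the chain is needed here — Lemma~4 is irrelevant to this theorem; only the lattice-nesting $[X_{t+1},Y_{t+1}] \subseteq [X_t,Y_t]$ (already noted before the statement) and the SSBC property are used.
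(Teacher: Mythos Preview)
Your argument is correct and matches the paper's proof essentially line for line: induction on $t$ with base case $X_1 \subseteq X_* \subseteq Y_1$, then for the inductive step pick $u \in U_t \setminus X_*$ (resp.\ $d \in D_t \cap X_*$), use $X_* \subseteq Y_t - u$ (resp.\ $X_t \subseteq X_* - d$) together with SSBC to force $F(u|X_*)>0$ (resp.\ $F(d|X_*-d)<0$) and contradict optimality. Note only that your lemma numbering is off by one relative to the paper --- the base case is Lemma~7, and the monotonicity result you correctly flag as irrelevant is Lemma~6.
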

\begin{proof}
Based on Lemma $7$, we have $X_1 \subseteq X_* \subseteq Y_1$. Suppose $X_t \subseteq X_* \subseteq Y_t$, we then prove $X_{t+1} \subseteq X_* \subseteq Y_{t+1}$. First, we suppose $X_{t+1} \not \subseteq X_*$. Because $X_{t+1} = X_t \cup U_t$, so $\exists \ u \in U_t$, $u \not \in X_*$. According to the definition of $U_t$, $F(u|Y_t-u) > 0$. Since $X_* \subseteq Y_t-u$, based on the SSBC property, we have $F(u|X_*) > 0$. This implies $F(X_*+u) > F(X_*)$, which contradicts the optimality of $X_*$. Hence $X_{t+1} \subseteq X_*$. And $X_* \subseteq Y_{t+1}$ can be proved in a similar way.
\end{proof}

Note that the proofs of Lemma $7$ and Theorem $2$ also work for local maximum cases, since we use singleton elements to construct contradictions.

\begin{lem}
Any local maximum of $F(X)$ is contained in the lattice $\mathcal{L} = [X_+,Y_+]$.
\end{lem}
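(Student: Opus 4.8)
The plan is to mimic, almost verbatim, the induction used in the proof of Theorem~2 (and of Lemma~4 in the minimization case), replacing ``global maximum'' by ``local maximum'' throughout; as the remark preceding the statement points out, only singleton perturbations are ever used to derive contradictions, so local optimality suffices. Recall that $P$ being a local maximum means $F(P-i) \le F(P)$ for all $i \in P$ and $F(P+j) \le F(P)$ for all $j \in N \setminus P$.

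First I would establish the base case $X_1 \subseteq P \subseteq Y_1$, exactly as in Lemma~7. For $X_1 \subseteq P$: if some $u \in X_1$ had $u \notin P$, then $F(u \mid N-u) > 0$ by definition of $X_1$; since $P \subseteq N-u$, the single sub-crossing property gives $F(u \mid P) > 0$, i.e.\ $F(P+u) > F(P)$, contradicting the local optimality of $P$. The inclusion $P \subseteq Y_1$ is symmetric, using a $d \in P \setminus Y_1$ with $F(d \mid \emptyset) < 0$ and $\emptyset \subseteq P - d$, together with the weaker condition stating $F(d\mid A) < 0 \Rightarrow F(d\mid B) < 0$ for $A \subseteq B$.

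Then I would run the inductive step: assuming $X_t \subseteq P \subseteq Y_t$, prove $X_{t+1} \subseteq P \subseteq Y_{t+1}$. Since $X_{t+1} = X_t \cup U_t$, if $X_{t+1} \not\subseteq P$ then some $u \in U_t$ has $u \notin P$; by definition of $U_t$, $F(u \mid Y_t - u) > 0$, and since $P \subseteq Y_t$ with $u \notin P$ we have $P \subseteq Y_t - u$, so single sub-crossing yields $F(u \mid P) > 0$, hence $F(P+u) > F(P)$, a contradiction. Symmetrically, since $Y_{t+1} = Y_t \setminus D_t$, if $P \not\subseteq Y_{t+1}$ then some $d \in D_t$ has $d \in P$; by definition of $D_t$, $d \in Y_t \setminus X_t$ and $F(d \mid X_t) < 0$, and since $X_t \subseteq P$ with $d \notin X_t$ we get $X_t \subseteq P - d$, so the weaker form of the property gives $F(d \mid P - d) < 0$, i.e.\ $F(P) < F(P-d)$, again contradicting local optimality of $P$. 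As $[X_+,Y_+]$ is one of the lattices in the chain $[X_+,Y_+] \subseteq \cdots \subseteq [X_1,Y_1]$, the conclusion $X_+ \subseteq P \subseteq Y_+$ follows by induction.

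The only thing to be careful about — and hence the ``main obstacle'', such as it is — is the bookkeeping of set inclusions in each application of the single sub-crossing property: one must check that the perturbing element lies outside the smaller of the two nested sets, so that the triple $A \subseteq B$, $i \notin B$ demanded by the property is genuinely met. In particular $d \notin X_t$ must be noted (it holds because $D_t \subseteq Y_t \setminus X_t$), and the conditions $u \notin P$, $d \in P$ are precisely the assumed-for-contradiction hypotheses. Beyond this routine verification, the argument is identical to that of Theorem~2 with ``optimality of $X_*$'' replaced by ``local optimality of $P$''.
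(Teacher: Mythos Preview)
Your proposal is correct and follows exactly the approach the paper indicates: the paper's own proof of Lemma~8 is simply the remark that the proofs of Lemma~7 and Theorem~2 go through verbatim for local maxima because only singleton perturbations are used, and you have carefully written out precisely that argument. The bookkeeping you flag (in particular $d \notin X_t$ because $D_t \subseteq Y_t \setminus X_t$) is indeed all that needs checking, and you have it right.
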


Lemma $8$ indicates that if $X_+$ ($Y_+$) is a local maximum, it is the local maximum which contains the least (most) number of elements. Unfortunately, finding a local maximum for submodular functions is hard \cite{feige2011maximizing}, let alone quasi-submodular cases. Nonetheless, Algorithm $2$ provides an efficient strategy for search interval reduction, which is helpful because the reduction is on the exponential power. In the experimental section, we show the reduction can be quite surprising. Moreover, when an objective function has a unique local maximum, which is also the global maximum $X_+ = Y_+$, our algorithm can find it quickly.

\begin{thm}
Algorithm $2$ terminates in $\mathcal{O}(n)$ iterations. The time complexity is $\mathcal{O}(n^2)$.
\end{thm}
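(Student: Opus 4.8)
The plan is to use the ``gap'' set $G_t \triangleq Y_t \setminus X_t$ as a monotone potential and show that its size strictly decreases at every iteration except the last. First I would recall two facts that already appear in (the proof of) Lemma 5: namely $U_t, D_t \subseteq Y_t \setminus X_t$ and $U_t \cap D_t = \emptyset$. Together with the update rules $X_{t+1} = X_t \cup U_t$ and $Y_{t+1} = Y_t \setminus D_t$ these give $G_{t+1} = (Y_t \setminus D_t) \setminus (X_t \cup U_t) = (Y_t \setminus X_t) \setminus (U_t \cup D_t)$, hence the clean identity $|G_{t+1}| = |G_t| - |U_t| - |D_t|$.

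Next, observe that the termination test in Step 3 of Algorithm 2 fails exactly when $X_{t+1} \ne X_t$ or $Y_{t+1} \ne Y_t$, i.e.\ when $U_t \ne \emptyset$ or $D_t \ne \emptyset$. Therefore in every iteration that is not the last one we have $|U_t| + |D_t| \ge 1$, and so $|G_{t+1}| \le |G_t| - 1$. Since $|G_0| = |N \setminus \emptyset| = n$, the integer $|G_t|$ can strictly decrease at most $n$ times; and as soon as $|G_t| = 0$ we have $X_t = Y_t$, whence $U_t = D_t = \emptyset$ (both are subsets of $\emptyset$) and Step 3 fires. Consequently Algorithm 2 halts after at most $n+1$ iterations, which is $\mathcal{O}(n)$.

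For the running time it then remains to bound the work done in a single iteration. At iteration $t$ the algorithm needs $F(u \mid Y_t - u)$ and $F(d \mid X_t)$ only for the elements $u, d \in G_t$; expressing these marginal gains through the shared values $F(Y_t)$, $\{F(Y_t - u)\}_{u \in G_t}$, $F(X_t)$, $\{F(X_t + u)\}_{u \in G_t}$, the iteration costs $\mathcal{O}(|G_t|) = \mathcal{O}(n)$ value-oracle calls plus $\mathcal{O}(|G_t|)$ elementary set operations. Summing over the $\mathcal{O}(n)$ iterations (or, more sharply, using $\sum_t |G_t| \le n + (n-1) + \cdots + 0 = \mathcal{O}(n^2)$ since the $|G_t|$ are strictly decreasing) yields $\mathcal{O}(n^2)$ oracle calls in total, and under the standard convention that each value-oracle call and each elementary set operation is unit cost this is $\mathcal{O}(n^2)$ time.

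I do not expect a real obstacle here — the whole argument is a counting/potential argument — but the one point that needs care is that the drop in $|G_t|$ be \emph{strict} at every non-terminal step. This relies on two things working in tandem: that the stopping rule is precisely ``$U_t = D_t = \emptyset$'', and that $U_t \cap D_t = \emptyset$ (Lemma 5), so that the additions to $X_t$ and the deletions from $Y_t$ never overlap and the identity $|G_{t+1}| = |G_t| - |U_t| - |D_t|$ holds without double counting; without the disjointness one could only conclude $|G_{t+1}| \le |G_t|$, which would not suffice.
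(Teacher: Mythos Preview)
Your argument is correct and matches the paper's own proof, which simply observes that the working lattice loses at least one element per non-terminal iteration and that each iteration queries every element of $Y_t\setminus X_t$ once; your potential $|G_t|=|Y_t\setminus X_t|$ makes this explicit. One small remark on your closing caveat: disjointness of $U_t$ and $D_t$ is needed for the exact identity $|G_{t+1}|=|G_t|-|U_t|-|D_t|$, but the strict decrease already follows from $G_{t+1}=G_t\setminus(U_t\cup D_t)$ together with $U_t\cup D_t\neq\emptyset$ at any non-terminal step.
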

\begin{proof}
After each iteration, at least one element is removed from the current working lattice, so it takes $\mathcal{O}(n)$ iterations to terminate. At each iteration, all the elements in the current working lattice need to be queried once. Hence, the total complexity of Algorithm $2$ is $\mathcal{O}(n^2)$.
\end{proof}

\section{Discussions}

In Algorithm $1$, $Q_+$ and $S_+$ are local minima. While in Algorithm $2$, $X_+$ and $Y_+$ may not be local maxima. Is it possible to find a lattice for quasi-submodular maximization, where the endpoint sets are local maxima? We give an example to show that such a lattice may not exist. Suppose $N=\{1,2\}$, $F(\emptyset) = (N) = 1$, and $F(\{1\}) = F(\{2\})= 1.5$. It is easy to check that $F$ is submodular, thus quasi-submodular. The set of local maxima is $\{\{1\},\{2\}\}$. There is no local maximum which contains or is contained by all the other local maxima, since $\{1\}$ and $\{2\}$ are not comparable under the set inclusion relation.

As aforementioned, unconstrained quasi-submodular function maximization is NP-hard. While for unconstrained quasi-submodular function minimization, whether there exists a polynomial time algorithm or not is open now.

\section{Experimental Results}

In this section, we experimentally verify the effectiveness and efficiency of our proposed algorithms. We implement our algorithms using the SFO toolbox \cite{krause2010sfo} and Matlab. All experiments are run on a single core Intel i$5$ $2.8$ GHz CPU with $4$GB RAM.

We list several widely used quasi-submodular functions and the settings of our experiments as the following:

\begin{itemize}
\item Iwata's function $F(X) = |X||N \setminus X| - \sum\limits_{i \in X}{(5i-2n)}$ \cite{fujishige2011submodular}. The ground set cardinality is set to be $n = 5000$.
\item The COM (concave over modular) function $F(X) = \sqrt{w_1(X)} + w_2(N \setminus X)$, where $w_1$ and $w_2$ are randomly generated in $[0,1]^n$. This function is applied in speech corpora selection \cite{iyer2013fast}. The ground set cardinality is set to be $n = 5000$.
\item The half-products function $F(X) = \sum\limits_{i,j \in X, i \le j}{a(i)b(j)} - c(X)$, where $a, b, c$ are modular functions, and $a, b$ are non-negative. This function is employed in formulations of many scheduling problems and energy models \cite{boros2002pseudo}. Since $F$ is quasi-supermodular, we minimize $F$ through equivalently maximizing the quasi-submodular function $-F$, \emph{i.e.}, $\min{F} = -\max{(-F)}$. $n$ is set to be $100$.
\item The linearly perturbed functions. We consider the perturbed facility location function $F(X) = L(M,X) + \sigma (X)$, where $L(M,X)$ is the facility location function. $M$ is a $n \times d$ positive matrix. $\sigma$ is a $n$-dimensional modular function which denotes the perturbing noise of facility. We set $n = 100$, $d = 400$, and randomly generate $M$ in $[0.5,1]^{n \times d}$, the perturbing noise $\sigma$ in $[-0.01,0.01]^n$.
\item The determinant function $F(X) = det(K_X)$, where $K$ is a real $n \times n$ positive definite matrix indexed by the elements of $N$,
    and $K_X = [K_{ij}]_{i,j \in X}$ is the restriction of $K$ to the indices of $X$. This function is used to represent the
    sampling probability of determinantal point processes \cite{kulesza2012determinantal}. We set $n = 100$.
\item The multiplicatively separable function $F(X) = \Pi_{i=1}^k{F_i(X_i)}$. One example is the Cobb-Douglas production function $F(X) =
    \Pi_{i=1}^n{w(i)^{\alpha_i}}$, where $w \ge 0$ and $\alpha_i \ge 0$. This function is applied in economic fields
    \cite{topkis1998supermodularity}. We set $n = 2000$.
\end{itemize}

We are concerned with the approximation ratio of an optimization algorithm. We compare the approximation ratio and running time of UQSFMax with MMax \cite{iyer2013fast}. For MMax, we consider the following variants: random permutation (RP), randomized local search (RLS), and randomized bi-directional greedy (RG). For UQSFMax, we use it as the preprocessing steps of RP, RLS and RG, and denote the corresponding combined methods as URP, URLS, and URG.

For Iwata's function and COM function, $n = 5000$. In such an input scale, the exact branch-and-bound algorithm \cite{goldengorin1999data} cannot terminate because of the exponential time complexity. Actually, since the reduced lattices are quite small, we use the branch-and-bound method on the reduced lattices to obtain the exact optima.

\begin{table}[h]
\caption{Average lattice reduction rates.}
\begin{center}
\begin{tabular}[t]{|c|c|c|}
\hline
Algorithm & UQSFMax & UQSFMin\\
\hline
Iwata's function & 99.9\% & 99.9\%\\
\hline
COM function  & 99.5\% & 100.0\%\\
\hline
half-products function & 51.2\% & 48.8\%\\
\hline
linearly perturbed function & 99.3\% & 99.8\%\\
\hline
determinant function & 87.0\% & 72.6\%\\
\hline
Multiplicatively separable function & 100.0\% & 100.0\%\\
\hline
\end{tabular}
\end{center}
\end{table}

\begin{table*}[!t]
\caption{Approximation ratios of different algorithms and functions.}
\begin{center}
\begin{tabular}[t]{|c|c|c||c|c||c|c|}
\hline
Algorithm & RP & URP & RLS & URLS & RG & URG\\
\hline
Iwata's function & 0.94 & \bf{1.00} & 0.99 & \bf{1.00} & 0.98 & \bf{1.00}\\
\hline
COM function & 0.99 & \bf{1.00} & 0.99 & \bf{1.00} & 0.99 & \bf{1.00}\\
\hline
half-products function & 0.96 & \bf{0.97} & \bf{0.95} & 0.94 & 0.96 & \bf{0.99}\\
\hline
linearly perturbed function & 0.99 & \bf{1.00} & 0.99 & \bf{1.00} & 0.99 & \bf{1.00}\\
\hline
\end{tabular}
\end{center}
\end{table*}

\begin{table*}[!t]
\caption{Running time (seconds) of different algorithms and functions.}
\begin{center}
\begin{tabular}[t]{|c|c|c||c|c||c|c|}
\hline
Algorithm & RP & URP & RLS & URLS & RG & URG\\
\hline
Iwata's function & 96.18 & \bf{2.42} & 240.62 & \bf{2.47} & 194.30 & \bf{2.41}\\
\hline
COM function  & 43.85 & \bf{7.01} & 194.52 & \bf{6.91} & 366.43 & \bf{7.16}\\
\hline
half-products function & 0.35 & \bf{0.22} & 0.98 & \bf{0.52} & 9.96 & \bf{4.59}\\
\hline
linearly perturbed function & 1.37 & \bf{0.06} & 3.12 & \bf{0.06} & 15.92 & \bf{0.06}\\
\hline
\end{tabular}
\end{center}
\end{table*}

Table $2$ presents the approximation ratios while Table $3$ shows the running time. According to the comparison results, we find that using
our UQSFMax as the preprocessing steps of other approximation methods can reach comparable or better approximation performance while improve
the efficiency, since the UQSFMax can efficiently reduce the search spaces of other approximation algorithms, and the reduced
lattices definitely contain all the local and global optima as shown in the previous theoretical analysis.

Note that for non-submodular functions (determinant function and multiplicatively separable function), at present we have no efficient method to get approximate optima. So we cannot calculate the approximation ratios and we just record the average lattice reduction rates. We also record the rates of other functions for completeness.

The average lattice reduction rates are shown in Table $1$. This result also matches the running time. For example, the average lattice
reduction rate for half-products function is $51.2\%$, and the running time of URG is about a half of the running time of RG. For minimization, we have similar lattice reduction results, which are also presented in Table $1$.

\section{Related Work}

In this section, we introduce some related work of quasi-submodularity.

\subsection{Quasi-Supermodularity}

Quasi-supermodularity stems from economic fields. Milgrom and Shannon \cite{milgrom1994monotone} first propose the definition of
quasi-supermodularity. They find that the maximizer of a quasi-supermodular function is monotone as the parameter changes. In
combinatorial optimization, for quasi-submodular functions, this property means the set of minimizers has a nested structure,
which is the foundation of the proposed UQSFMin algorithm.

\begin{thm}[Reformulated from \cite{milgrom1994monotone}]
Given a quasi-submodular function $F: 2^N \mapsto \mathbb{R}$. $\forall A, B \subseteq N$, $A \subseteq B$, $\exists S_A \in \argmin_{S \subseteq A}{F(S)}$, $S_B \in \argmin_{S \subseteq B}{F(S)}$, s.t. $S_A \subseteq S_B$.
\end{thm}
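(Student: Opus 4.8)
The plan is to exploit the lattice structure directly through the quasi-submodularity inequality (\ref{eq:1}), without appealing to Algorithm $1$. First I would fix an arbitrary minimizer $S_A^0 \in \argmin_{S \subseteq A}{F(S)}$ and an arbitrary minimizer $S_B^0 \in \argmin_{S \subseteq B}{F(S)}$, and then show that $S_A^0$ together with $S_A^0 \cup S_B^0$ forms the desired nested pair. So the whole proof is really just an application of (\ref{eq:1}) to these two sets.

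The key observation is that because $A \subseteq B$, the sets $S_A^0 \cap S_B^0$ and $S_A^0 \cup S_B^0$ stay inside the respective feasible regions: $S_A^0 \cap S_B^0 \subseteq S_A^0 \subseteq A$, and $S_A^0 \cup S_B^0 \subseteq A \cup B = B$. Hence, by the optimality of $S_A^0$ over $2^A$, $F(S_A^0 \cap S_B^0) \ge F(S_A^0)$. Now I apply (\ref{eq:1}) with $X = S_A^0$ and $Y = S_B^0$: since $F(X \cap Y) \ge F(X)$, the first implication yields $F(Y) \ge F(X \cup Y)$, that is $F(S_B^0) \ge F(S_A^0 \cup S_B^0)$. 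On the other hand, $S_A^0 \cup S_B^0 \subseteq B$ together with the optimality of $S_B^0$ over $2^B$ gives $F(S_B^0) \le F(S_A^0 \cup S_B^0)$. Combining the two inequalities, $F(S_A^0 \cup S_B^0) = F(S_B^0)$, so $S_A^0 \cup S_B^0$ is itself a minimizer of $F$ over $2^B$.

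It then suffices to set $S_A := S_A^0$ and $S_B := S_A^0 \cup S_B^0$: by construction $S_A \subseteq S_B$, $S_A \in \argmin_{S \subseteq A}{F(S)}$, and $S_B \in \argmin_{S \subseteq B}{F(S)}$, which is exactly the claim. (Symmetrically, one could instead keep $S_B^0$ and replace $S_A^0$ by $S_A^0 \cap S_B^0$, reading (\ref{eq:1}) from the other side; either route works.)

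I do not expect a genuine obstacle here. The only points that need care are: checking that $S_A^0 \cup S_B^0$ is a legitimate competitor in the $B$-problem, which is exactly where the hypothesis $A \subseteq B$ is used; and making sure the hypothesis $F(X \cap Y) \ge F(X)$ triggering (\ref{eq:1}) is the inequality actually supplied by the optimality of $S_A^0$, not its reverse. If one wanted the strict analogue or a statement about a canonical (e.g.\ minimal) minimizer, one would additionally invoke the second (strict) implication in (\ref{eq:1}), but that is not needed for the existence statement as written.
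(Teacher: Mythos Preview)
Your proof is correct and is essentially identical to the paper's own argument: both pick arbitrary minimizers $S_A^0$ and $S_B^0$, use optimality of $S_A^0$ over $2^A$ to obtain $F(S_A^0 \cap S_B^0) \ge F(S_A^0)$, apply (\ref{eq:1}) with $X=S_A^0$, $Y=S_B^0$ to get $F(S_A^0 \cup S_B^0) \le F(S_B^0)$, and conclude that $S_B := S_A^0 \cup S_B^0$ is a $B$-minimizer containing $S_A := S_A^0$. The only cosmetic difference is notation.
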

\begin{proof}
Suppose $S_A \in \argmin_{S \subseteq A}{F(S)}$, $S_{B'} \in \argmin_{S \subseteq B}{F(S)}$. We have $F(S_A) \le F(S_A \cap S_{B'})$ because of $S_A \subseteq A$, $S_A \cap S_{B'} \subseteq A$ and $F(S_A) = \min_{S \subseteq A}{F(S)}$. According to quasi-submodularity, we have $F(S_A \cup S_{B'}) \le F(S_{B'})$. Denote $S_B \triangleq S_A \cup S_{B'}$. It is obvious that $S_B \in \argmin_{S \subseteq B}{F(S)}$ and $S_A \subseteq S_B$.
\end{proof}

Based on the theorem above, suppose we start from $X = \emptyset$, if $\exists i \in N \setminus X$, $F(X+i) < F(X)$, then we can set $X \leftarrow X+i$. This theorem ensures that there exists a chain structure of minimizers. This is a general principle. First, it works in submodular cases, for submodularity is a strict subset of quasi-submodularity. Moreover, when the superdifferential in \cite{iyer2013fast} is not superdifferential for non-submodular quasi-submodular functions, such as the determinant function and the multiplicatively separable functions, this principle can also hold.

In \cite{milgrom1994monotone}, only quasi-submodular function minimization (or equivalently, quasi-supermodular function maximization) is considered. For quasi-submodular function maximization, there is no existing study.

\subsection{Discrete Quasi-Convexity}

Another related direction is discrete quasi-convexity \cite{murota2003discrete,murota2003quasi}, which departs further from combinatorial optimization. In this paper, we consider set functions, \emph{i.e.}, functions defined on $\{0,1\}^n$. While in \cite{murota2003quasi}, quasi L-convex function, which is defined on $\mathbb{Z}^n$, is proposed.

In \cite{murota2003quasi}, quasi L-convex function is a kind of integer-valued function. When we restrict its domain from
$\mathbb{Z}^n$ to $\{0,1\}^n$, quasi L-convex function reduces to quasi-submodular function. Meanwhile, their results based on
$\mathbb{Z}^n$ domain extension reduces to trivial cases in combinatorial optimization. Hence, we view quasi L-convexity
\cite{murota2003quasi} as a generalization of quasi-submodularity based on domain extension, \emph{i.e.}, extending the domain from
$\{0,1\}^n$ to $\mathbb{Z}^n$.

\subsection{Submodularity}

As a special case of quasi-submodularity, submodularity should be the most related work to quasi-submodularity. Iyer et al. \cite{iyer2013fast} propose the  superdifferential based discrete Majorization-Minimization like algorithm, which performs lattice reduction for submodular function minimization. While the preliminary preservation algorithm \cite{goldengorin2009maximization} has the same effect for submodular function maximization.

As a weaker notion than submodularity, quasi-submodularity has no superdifferential, but it is also sufficient for lattice reduction. Thus the proposed UQSFMin algorithm can be viewed as a generalization of the MMin algorithm \cite{iyer2013fast}. One should note that since there is no known superdifferential for quasi-submodular function, our proof based on sub-single crossing property is quite different from the superdifferential based MMin algorithm. Generally, quasi-submodular function optimization is much harder than submodular function optimization.

Goldengorin \cite{goldengorin2009maximization} proposes the preliminary preservation algorithm (PPA), which is based on the preservation rules \cite{goldengorin1999maximization}. The preservation rule is another interpretation of the maximizers of submodular functions using set interval lattice partition. Unlike the superdifferential, we find that the preservation rules perfectly hold for not only submodular functions but also quasi-submodular functions. This provides an elegant principle for quasi-submodular function maximization. Using preservation rules for quasi-submodularity can also lead to the proposed UQSFMax algorithm. Thus we view UQSFMax as a generalization of PPA from submodular function maximization to quasi-submodular function maximization.

\subsection{Applications}

Unlike submodularity, quasi-submodularity is not well-known. Nonetheless, there are several applications related to quasi-submodularity scattered in different fields.

In rent seeking game, every contestant tends to maximize his probability of winning for a rent by adjusting his bidding. The payoff function of each contestant is quasi-submodular on his bidding and the total bidding of all the contestants (also called "aggregator"). Rent seeking game is a kind of aggregative quasi-submodular game, where each player's payoff function is quasi-submodular. We refer readers to \cite{schipper2004submodularity} for more details and examples of aggregative quasi-submodular games.

In minimum cut problems with parametric arc capacities, submodularity implies nested structural properties \cite{granot2012structural}.
While quasi-submodularity also leads to the same properties. But how to employ the properties to find an efficient max flow update algorithm for quasi-submodular functions is open at present \cite{granot2012structural}.

\section{Conclusions}

In this paper, we go beyond submodularity, and focus on a universal generalization of submodularity called quasi-submodularity. We propose two effective and efficient algorithms for unconstrained quasi-submodular function optimization. The theoretical analyses and experimental results demonstrate that although quasi-submodularity is a weaker property than submodularity, it has some good properties in optimization, which lead to lattice reduction while enable us to keep local and global optima in reduced lattices.

In our future work, we would like to make our algorithms exact for quasi-submodular function minimization and approximate for quasi-submodular function maximization if it is possible, and try to incorporate the constrained optimization into our framework.

{\small
\bibliographystyle{plain}
\bibliography{nipsbib}
}

\end{document}